\newcommand{\bp}{\begin{proof} \small }
\newcommand{\ep}{\end{proof} \normalsize}
\newcommand{\epx}{\end{proof} \small}
\newcommand{\bpa}{\begin{proofappx} \footnotesize }
\newcommand{\epa}{\end{proofappx} \small }
\newtheorem{theorem}{Theorem}
\newtheorem{proposition}{Proposition}
\newtheorem{lemma}{Lemma}
\newtheorem*{theorem*}{Theorem}
\newtheorem*{proposition*}{Proposition}
\newtheorem*{corollary*}{Corollary}
\newtheorem*{lemma*}{Lemma}
\newtheorem*{assumption*}{Assumption}
\newtheorem*{definition*}{Definition}
\newtheorem*{claim*}{Claim}
\newcommand{\be}{\begin{equation}}
\newcommand{\ee}{\end{equation}}
\newcommand{\bs}{\begin{subequations}}
\newcommand{\es}{\end{subequations}}
\newcommand{\bq}{\begin{eqnarray}}
\newcommand{\eq}{\end{eqnarray}}
\newcommand{\bqn}{\begin{eqnarray*}}
\newcommand{\eqn}{\end{eqnarray*}}
\newcommand{\ba}{\left[ \begin{array}}
\newcommand{\ea}{\\ \end{array} \right]}
\newcommand{\ben}{\begin{enumerate}}
\newcommand{\een}{\end{enumerate}}
\def\A{{\boldsymbol{A}}}
\def\H{{\boldsymbol{H}}}
\def\l{{\boldsymbol{l}}}
\def\w{{\boldsymbol{w}}}
\def\x{{\boldsymbol{x}}}
\def\z{{\boldsymbol{z}}}
\def\real{{\mathchoice%
{\hbox{\rm\setbox1=\hbox{I}\copy1\kern-.45\wd1 R}}
{\hbox{\rm\setbox1=\hbox{I}\copy1\kern-.45\wd1 R}}
{\hbox{\scriptsize\rm\setbox1=\hbox{I}\copy1\kern-.45\wd1 R}}
{\hbox{\scriptsize\rm\setbox1=\hbox{I}\copy1\kern-.45\wd1 R}}}}
\def\Zint{{\mathchoice{\setbox1=\hbox{\sf Z}\copy1\kern-.75\wd1\box1}
{\setbox1=\hbox{\sf Z}\copy1\kern-.75\wd1\box1}
{\setbox1=\hbox{\scriptsize\sf Z}\copy1\kern-.75\wd1\box1}
{\setbox1=\hbox{\scriptsize\sf Z}\copy1\kern-.75\wd1\box1}}}
\newcommand{\complex}{ \hbox{\rm C\kern-0.45em\rule[.07em]{.02em}{.58em}%
\kern 0.43em}}
\begin{document}
	%
\title{Heterogeneous Coded Computation across 
	Heterogeneous Workers}
	
	\author{\IEEEauthorblockN{Yuxuan Sun$^*$, Junlin Zhao$^\dagger$, Sheng Zhou$^*$, Deniz G\"und\"uz$^\dagger$}\\
		\IEEEauthorblockA{$^*$Beijing National Research Center for Information Science and Technology\\
			Department of Electronic Engineering, Tsinghua University, Beijing 100084, China\\
        $^\dagger$Department of Electrical and Electronic Engineering, Imperial College London, London SW7 2BT, UK\\
        Email: \{sunyx15@mails., sheng.zhou@\}tsinghua.edu.cn, \{j.zhao15, d.gunduz\}@imperial.ac.uk}
    \thanks{This work is sponsored in part by the European Research Council (ERC) under Starting Grant BEACON (grant No. 725731), the Nature Science Foundation of China (No. 61871254, No. 91638204, No. 61571265, No. 61861136003, No. 61621091), National Key R\&D Program of China 2018YFB0105005, and Intel Collaborative Research Institute for Intelligent and Automated Connected Vehicles.}}

	\maketitle

	\begin{abstract}
		Coded distributed computing framework enables large-scale machine learning (ML) models to be trained efficiently in a distributed manner, while mitigating the straggler effect.
		In this work, we consider a multi-task assignment problem in a coded distributed computing system, where multiple masters, each with a different matrix multiplication task, assign computation tasks to workers with heterogeneous computing capabilities.
		Both \emph{dedicated} and \emph{probabilistic} worker assignment models are considered, with the objective of minimizing the average completion time of all computations.
		For dedicated worker assignment, greedy algorithms are proposed and the corresponding optimal load allocation is derived based on the Lagrange multiplier method.
		For probabilistic assignment, successive convex approximation method is used to solve the non-convex optimization problem.
		Simulation results show that the proposed algorithms reduce the completion time by $80\%$ over uncoded scheme, and $49\%$ over an unbalanced coded scheme.

	\end{abstract}
%
		
	%
	\IEEEpeerreviewmaketitle
	
\section{Introduction}

Machine learning (ML) techniques are penetrating into many aspects of human lives, and boosting the development of new applications from autonomous driving, virtual and augmented reality, to Internet of things \cite{park2018wireless}.
Training complicated ML models requires computations with massive volumes of data, e.g., large-scale matrix-vector multiplications, which cannot be realized on a single centralized computing server.
Distributed computing frameworks such as MapReduce \cite{lee2018speeding} enable a centralized \emph{master} node to allocate data and update global model, while tens or hundreds of distributed computing nodes, called \emph{workers}, train ML models in parallel using partial data. Since task completion time depends on the slowest worker, a key bottleneck in distributed computing is the \emph{straggler effect}: experiments on Amazon EC2 instances show that some workers can be 5 times slower than the typical performance \cite{tandon2017}. 

Straggler effect can be mitigated by adding redundancy to the distributed computing system via coding \cite{tandon2017, li2016unified, lee2018speeding, hierarchical,codedhet, nonpersistent,new_dutta2018}, or by scheduling computation tasks \cite{li_nearoptimal,new_on_the_effect,mma2019computation}.
Maximum distance separable (MDS) codes are widely applied for matrix multiplications \cite{tandon2017, li2016unified, lee2018speeding, hierarchical,codedhet, new_dutta2018}, which can reduce the task completion time by $O(\log N)$, where $N$ is the number of workers \cite{lee2018speeding}.
A unified coded computing framework for straggler mitigation is proposed in \cite{li2016unified}.
Heterogeneous workers are considered in \cite{codedhet}, and an asymptotically optimal load allocation scheme is proposed.
Although the stragglers are slower than the typical workers, they can still make non-negligible contributions to the system \cite{hierarchical, nonpersistent}. A hierarchical coded computing framework is thus proposed in \cite{hierarchical}, where tasks are partitioned into multiple levels so that stragglers contribute to subtasks in the lower levels.
Multi-message communication with Lagrange coded computing is used in \cite{nonpersistent} to exploit straggler servers.

The straggler effect can be mitigated even with uncoded computing, via redundant scheduling of tasks and multi-message communications.
A batched coupon's collector scheme is proposed in \cite{li_nearoptimal}, and the expected completion time is analyzed in \cite{new_on_the_effect}. The input data is partitioned into batches, and each worker randomly processes one at a time, until the master collects all the results.
Deterministic scheduling orders of tasks at different workers are proposed in  \cite{mma2019computation}, specifically cyclic and staircase scheduling, and the relation between redundancy and task completion time is characterized.

Existing papers mainly consider a single master.
However, in practice, workers may be shared by more than one masters to carry out multiple large-scale computation tasks in parallel. 
Therefore, in this work, we focus on a multi-task assignment problem for a heterogeneous distributed computing system using MDS codes. As shown in Fig. \ref{system}, we consider multiple masters, each with a matrix-vector multiplication task, and a number of workers with \emph{heterogeneous} computing capabilities. The goal is to design centralized worker assignment and load allocation algorithms that minimize the completion time of all the tasks. We consider both \emph{dedicated} and \emph{probabilistic} worker assignment policies, and formulate a non-convex optimization problem under a unified framework.
For dedicated assignment, each worker serves one master.
The optimal load allocation is derived, and the worker assignment is transformed into a max-min allocation problem, for which NP-hardness is proved and greedy algorithms are proposed.
For probabilistic assignment, each worker selects a master to serve based on an optimized probability, and a successive convex approximation (SCA) based algorithm is proposed.
Simulation results show that 
the proposed algorithms can drastically reduce the task completion time compared to uncoded and unbalanced coded schemes.

The rest of the paper is organized as follows. The system model and problem formulation is introduced in Sec. \ref{sys}.
Dedicated and probabilistic worker assignments, and the corresponding load allocation algorithms are proposed in Sec. \ref{dedicated} and Sec. \ref{oppor}, respectively. 
Simulation results are presented in Sec. \ref{sim}, and the conclusions are summarized in Sec. \ref{con}.

\begin{figure}[!t]
	\centering
	\includegraphics[width=0.6\textwidth]{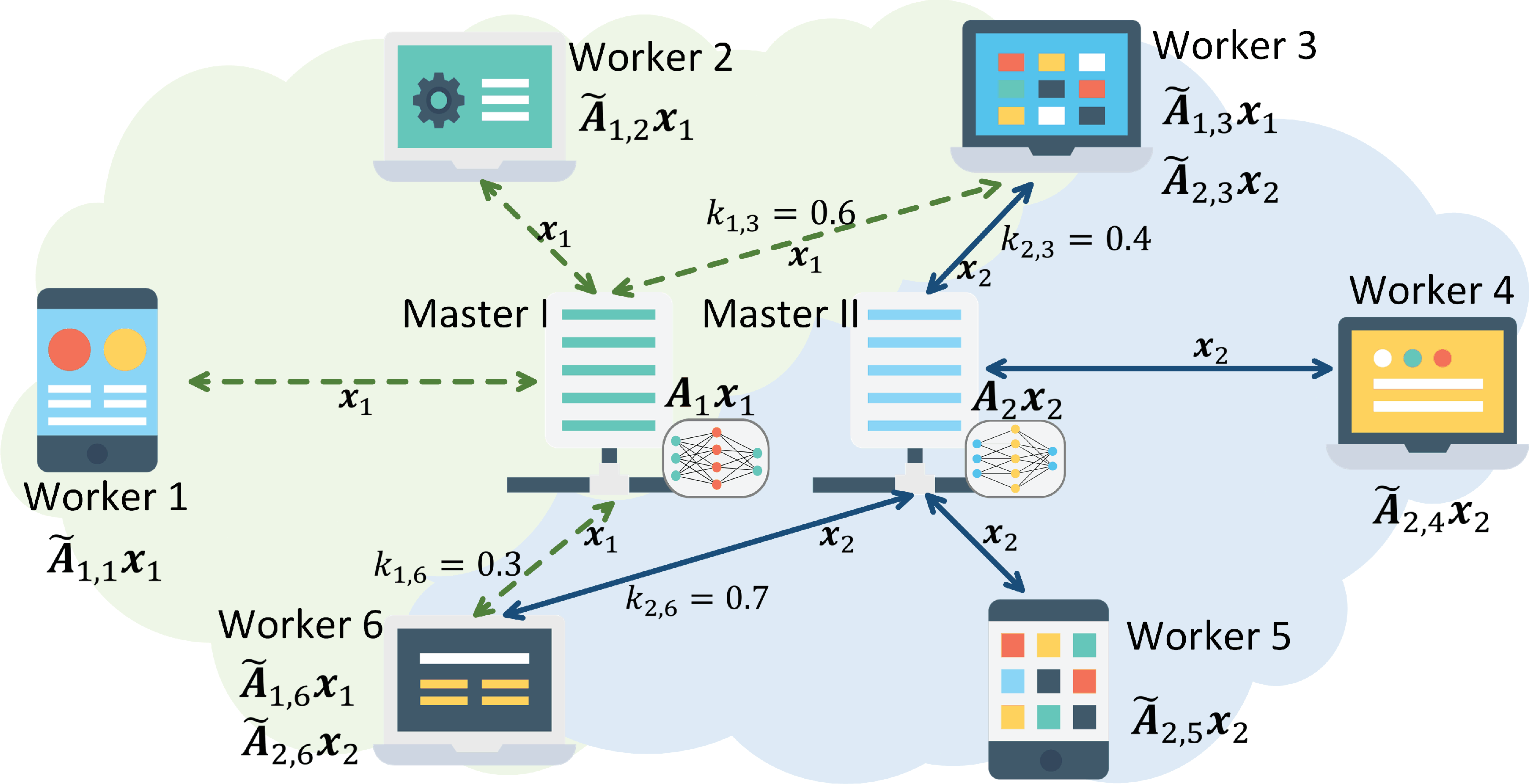}
	\caption{Illustration of a distributed computing system with multiple master and worker nodes.}	\label{system}
	\vspace{-3mm}
\end{figure}

\newtheorem{remark}{Remark}
\section{System Model and Problem Formulation}  \label{sys}

\subsection{System Architecture}
We consider a heterogeneous distributed computing system with $M$ masters $\mathcal{M}=\{1,2,...,M\}$, and $N$ workers $\mathcal{N}=\{1,2,...,N\}$, with $N>M$. 
We assume that each master has a matrix-vector multiplication task\footnote{In training ML models, e.g., linear regression, matrix-vector multiplication tasks are carried out at each iteration of the gradient descent algorithm. These tasks are independent over iterations, thus we focus on one iteration here.}.
The task of master $m$ is denoted by $\A_m\x_m$, where $\A_m\in \mathbb{R}^{L_m\times s_m}$, $\x_m\in \mathbb{R}^{s_m}$, $L_m,s_m \in \mathbb{Z}^+$. 
The masters can use the workers to complete their computation tasks in a distributed manner.


To deal with the straggling workers, we adopt MDS coded computation, and encode the rows of $\A_m$. Define the coded version of $\A_m$ as $\boldsymbol{\tilde{A}}_m$, which is further divided into $N$ sub-matrices:
\begin{align}
\boldsymbol{\tilde{A}}_m=\left[\boldsymbol{\tilde{A}}_{m,1}^T,~\boldsymbol{\tilde{A}}_{m,2}^T,~\cdots,~ \boldsymbol{\tilde{A}}_{m,N}^T\right]^T,
\end{align}
where $\boldsymbol{\tilde{A}}_{m,n}\in \mathbb{R}^{l_{m,n}\times s_m}$ is assigned to worker $n$, and $l_{m,n}$ is a non-negative integer representing the load allocated to worker $n$. 
Vector $\x_m$ is multicast from master $m$ to the workers with $l_{m,n}>0$, and worker $n$ calculates the multiplication of $l_{m,n}$ coded rows of $\A_m$ (which is $\boldsymbol{\tilde{A}}_{m,n}$) and $\x_m$.
Matrix $\A_m$ is thus $\left(\sum_{n=1}^{N}l_{m,n}, L_m\right)$-MDS-coded, with the requirement of $\sum_{n=1}^{N}l_{m,n}\geq L_m$.
Upon aggregating the multiplication results for any $L_m$ coded rows of $\A_m$, master $m$ can recover $\A_m\x_m$.

\subsection{Task Processing Time}
The processing times of the assigned computation tasks at the workers are modeled as mutually independent random variables. Following the literature on coded computing \cite{ li2016unified, lee2018speeding, hierarchical,codedhet}, the processing time at each worker is modeled by a shifted exponential distribution\footnote{In this work, the worker assignment and load allocation algorithms are designed based on the assumption of shifted exponential distribution. However, the proposed methods can also be applied to other distributions, as long as the corresponding function $f(x,t)$ defined in \eqref{fxt} is convex.}. 
The processing time, $T_{m,n}^{[l_{m,n}]}$, for worker $n$ to calculate the multiplication of $l_{m,n}>0$ coded rows of $\A_m$ and $\x_m$ has the cumulative distribution function: 
\begin{align}
\mathbb{P}\left[T_{m,n}^{[l_{m,n}]}\leq t\right]=
\begin{cases}
&1-e^{-\frac{u_{m,n}}{l_{m,n}}\left(t-a_{m,n}l_{m,n}\right)},t\geq a_{m,n}l_{m,n},\\
&0, ~~ \text{otherwise},
\end{cases}
\end{align}
where $a_{m,n}>0$ is a parameter indicating the minimum processing time for one coded row, and $u_{m,n}>0$ is the parameter modeling the straggling effect.


We consider a \emph{heterogeneous environment} by assuming that $u_{m,n}$ and $a_{m,n}$ are different over different master-worker pairs $(m,n)$, for $\forall m \in \mathcal{M}$ and $\forall n \in \mathcal{N}$.
This assumption is due to the fact that workers may have different computation speeds, and the dimensions of $\A_m$ and $\x_m$ vary over $m$.


\subsection{Worker Assignment Policy}
We consider two worker assignment policies: 

	1) \emph{Dedicated worker assignment:}
	In this policy, each worker $n$ is assigned computation tasks from a single master $m\in \mathcal{M}$. Let indicator $k_{m,n}=1$ if worker $n$ provides computing service for master $m$, and $k_{m,n}=0$ otherwise.
	Since a worker serves at most one master, we have $\sum_{m=1}^{M}k_{m,n}\leq1, \forall n \in\mathcal{N}$.
	
	2) \emph{Probabilistic worker assignment:}
	In this policy, each worker randomly selects which master to serve according to probability $k_{m,n}\in [0,1]$.
	For each worker $n \in\mathcal{N}$, we have $\sum_{m=1}^{M}k_{m,n}\leq1$.
	In Fig. \ref{system}, worker $3$ selects master $1$ to serve with probability $0.6$, and master $2$ with probability $0.4$. 
	


\subsection{Problem Formulation} \label{for}
Let $X_{m,n}(t)$ denote the number of multiplication results (one result refers to the multiplication of one coded row of $\A_m$ with $\x_m$) master $m$ collects from worker $n$
till time $t$.
We assume that worker $n$ computes $\boldsymbol{\tilde{A}}_{m,n}\x_m$ and then sends the result to the master $m$ upon completion, without further dividing it into subtasks or transmitting any feedbacks before completion.
Therefore, master $m$ can either receive $l_{m,n}$ results or none from worker $n$ till time $t$. 
We denote the number of aggregated results at master $m$ until time $t$ by $X_m(t)$, and we have $X_{m}(t)=\sum_{n=1}^{N}X_{m,n}(t)$.

Our objective is to minimize the average completion time $t$, upon which all the masters can aggregate sufficient results from the workers to recover their computations with high probability. 
We aim to design a centralized policy that optimizes worker assignment $\{k_{m,n}\}$ and load allocation $\{l_{m,n}\}$.
The optimization problem is formulated as:
\begin{subequations}
\begin{align}
\mathcal{P}1: \min_{\{l_{m,n},k_{m,n},t\}} &~~~~~~t \label{ori_obj} \\
\text{s.t.} ~~~~~&\mathbb{P}\left[X_m(t)\geq L_m\right]\geq \rho_s, ~ \forall m,  \label{ori_cons_exp} \\
&\sum_{m=1}^{M} k_{m,n}\leq1, ~~\forall n,   \label{ori_cons_sum} \\
&k_{m,n}\in \mathcal{K}, ~~  l_{m,n} \in \mathbb{N},~~\forall m,n,   \label{ori_cons_k} 
\end{align}
\end{subequations}
where we have $\mathcal{K}=\{0,1\}$ for dedicated worker assignment, while $\mathcal{K}=[0,1]$ for probabilistic worker assignment, and $\mathbb{N}$ is the set of non-negative integers.
In constraint \eqref{ori_cons_exp}, $\rho_s$ is defined as the probability that master $m$ receives no less than $L_m$ results until time $t$; that is, the probability of $\A_m\x_m$ being recovered.
Constraint \eqref{ori_cons_sum} guarantees that under dedicated assignment, each worker serves at most one master, and under probabilistic assignment, the total probability rule is satisfied. 

The key challenge to solve $\mathcal{P}1$ is that, constraint \eqref{ori_cons_exp} cannot be explicitly expressed, since it is difficult to find all the combinations that satisfy $X_m(t)\geq L_m$ in a heterogeneous environment with non-uniform loads $\{l_{m,n}\}$.
Therefore, we instead consider an approximation to this problem, by substituting constraint \eqref{ori_cons_exp} with an expectation constraint:
\begin{subequations}
\begin{align}
\mathcal{P}2: \min_{\{l_{m,n},k_{m,n},t\}} &~~~~~~t \label{p1_obj} \\
\text{s.t.} ~~~~~&L_m-\mathbb{E}[X_m(t)]\leq 0, ~ \forall m,  \label{cons_exp} \\
&\text{Constraints } \eqref{ori_cons_sum}, \eqref{ori_cons_k}, \nonumber
\end{align}
\end{subequations}
where constraint \eqref{cons_exp} states that the expected number of results master $m$ receives until time $t$ is no less than $L_m$. A similar approach is used in \cite{codedhet}, where the gap between the solutions of $\mathcal{P}1$ and $\mathcal{P}2$ is proved to be bounded when there is a single master.
We will design algorithms that solve $\mathcal{P}2$ in the following two sections. 

Constraint \eqref{cons_exp} can be explicitly expressed. Let $\mathbb{I}_{\{x\}}$ be an indicator function with value $1$ if event $\{x\}$ is true, and $0$ otherwise.
If $k_{m,n}>0$ (and thus $l_{m,n}>0$),
\begin{align}
&\mathbb{E}[X_{m,n}(t)]=\mathbb{E}\left[k_{m,n}l_{m,n} \mathbb{I}_{\left\{T_{m,n}^{[l_{m,n}]}\leq t \right\}}\right] 
=
\begin{cases}
k_{m,n}l_{m,n}\left[1-e^{-\frac{u_{m,n}}{l_{m,n}}\left(t-a_{m,n}l_{m,n}\right)}\right],t\geq a_{m,n}l_{m,n},\\
0, ~~ \text{otherwise}.  
\end{cases} \label{x_mn_def}
\end{align}
If $k_{m,n}=0$ (and thus $l_{m,n}=0$), $\mathbb{E}[X_{m,n}(t)]=0$.
And we have $\mathbb{E}[X_{m}(t)]=\sum_{n=1}^{N}\mathbb{E}[X_{m,n}(t)]$.

%


The following observations help us simplify $\mathcal{P}2$:

1) From constraint \eqref{cons_exp}, we can infer that for $\forall m\in\mathcal{M}$, the optimal task completion time $t^*$ satisfies $t^*\geq \max_{\{n\in\Omega_m\}}\{a_{m,n}l_{m,n}\}$, where $\Omega_m\subset \mathcal{N}$ is the subset of workers serving master $m$.
In fact, if there exists $n_0\in\mathcal{N}$ such that $t^*< a_{m,n_0}l_{m,n_0}$, we have $\mathbb{E}[X_{m,n_0}(t^*)]=0$, i.e., master $m$ cannot expect to receive any results from worker $n_0$. By reducing $l_{m,n_0}$ to satisfy $\mathbb{E}[X_{m,n_0}(t^*)]>0$, it is possible to further reduce $t^*$.

2) Due to the high dimension of input matrix $\A_m$, $l_{m,n}$ is usually in the order of hundreds or thousands. So we relax the constraint $l_{m,n} \in \mathbb{N}$ to $l_{m,n}\geq0$, and omit the effect of rounding in the following derivations. 


%

Based on the two statements, by substituting \eqref{x_mn_def}, we simplify constraint \eqref{cons_exp} as: 
\begin{align}
L_m-\sum_{n=1}^{N}k_{m,n}l_{m,n}
\left(1-e^{-\frac{u_{m,n}}{l_{m,n}}\left(t-a_{m,n}l_{m,n}\right)}\right)\leq 0. \label{cons_exp_simple}
\end{align}
And problem $\mathcal{P}2$ can be simplified as follows:
\begin{subequations}
\begin{align}
\mathcal{P}3: \min_{\{l_{m,n},k_{m,n},t\}} &~~~~~~t \label{p2_obj} \\
\text{s.t.} ~~~~~& \text{Constraints } \eqref{ori_cons_sum}, \eqref{cons_exp_simple}, \nonumber\\
&k_{m,n}\in \mathcal{K}, ~~ l_{m,n} \geq 0, ~~\forall m,n.  \label{cons_l}
\end{align}
\end{subequations}


Problem $\mathcal{P}3$ is a non-convex optimization problem due to the non-convexity of \eqref{cons_exp_simple}, which is in general difficult to solve. In the following two sections, we will propose algorithms for dedicated and probabilistic worker assignments and corresponding load allocations, respectively. 

\section{Dedicated Worker Assignment} \label{dedicated}
In this section, we solve $\mathcal{P}3$ for dedicated worker assignment, where $\mathcal{K}=\{0,1\}$.
Given the assignment of workers, we first derive the optimal load allocation.
Then the worker assignment can be transformed into a max-min allocation problem, for which NP-hardness is shown and two greedy algorithms are developed.

\subsection{Optimal Load Allocation for a Given Worker Assignment}
We first assume that the subset of workers that serve master $m$ is given by $\Omega_m\subset \mathcal{N}$, and derive the optimal load allocation for master $m$, that minimizes the approximate completion time. 
The problem is formulated as:
\begin{subequations}
\begin{align}
\mathcal{P}4:~\min_{\{l_{m,n},~t_m\}} &~~~t_m \label{p3_obj} \\
\text{s.t.} ~~~~&L_m-\mathbb{E}[X_m(t_m)]\leq 0, \label{p3_cons1} \\ 
&l_{m,n} \geq 0, \forall n \in \Omega_m,  \label{p3_cons2}
\end{align}
\end{subequations}
where $t_m$ is defined as the approximate completion time of master $m$, $X_m(t_m)=\sum_{n\in \Omega_m}X_{m,n}(t_m)$ is the number of results aggregated at master $m$ till time $t_m$, and
\begin{align}
\mathbb{E}[X_m(t_m)]  
=\sum_{n\in \Omega_m}l_{m,n} \left( 1-e^{-\frac{u_{m,n}}{l_{m,n}}\left(t_m-a_{m,n}l_{m,n}\right)} \right).
\end{align}
\begin{lemma}
	Problem $\mathcal{P}4$ is a convex optimization problem.
\end{lemma}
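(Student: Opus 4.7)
The plan is to verify convexity term-by-term. The objective $t_m$ and the non-negativity constraints $l_{m,n}\ge 0$ are linear in the decision variables, so they contribute nothing non-trivial. The entire content of the lemma is that the expectation constraint \eqref{p3_cons1}, which expands to
\begin{equation*}
L_m \;-\; \sum_{n\in\Omega_m} l_{m,n}\Bigl(1 - e^{-\frac{u_{m,n}}{l_{m,n}}(t_m - a_{m,n}l_{m,n})}\Bigr)\;\le\;0,
\end{equation*}
defines a convex set. Since $L_m$ is constant and the sum is taken over fixed $\Omega_m$, it suffices to show that for each $n\in\Omega_m$ the per-worker contribution
\begin{equation*}
f(l,t) \;=\; l\Bigl(1 - e^{-\frac{u}{l}(t - a l)}\Bigr), \qquad l>0,\ t\ge al,
\end{equation*}
is jointly concave in $(l,t)$ (so that its negation is convex and the constraint set is a sublevel set of a convex function).

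For the concavity step I would use a perspective-function argument rather than a brute-force Hessian calculation, since it is much cleaner. Write
\begin{equation*}
f(l,t) \;=\; l\cdot h\!\left(\frac{t}{l}\right), \qquad h(s) \;=\; 1 - e^{-u(s-a)}.
\end{equation*}
The scalar function $h$ has $h''(s) = -u^{2} e^{-u(s-a)} < 0$, so $h$ is concave on $\mathbb{R}$. The perspective operation preserves concavity (equivalently, it preserves convexity applied to $-h$), hence $f$ is jointly concave in $(l,t)$ on the open half-space $\{l>0\}$, and extends continuously to $l=0$. Consequently $-f$ is convex, the summation over $n\in\Omega_m$ preserves convexity, and constraint \eqref{p3_cons1} is a convex constraint. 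Combined with the linear objective and the linear constraints $l_{m,n}\ge 0$, this shows $\mathcal{P}4$ is a convex program.

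I do not expect any real obstacle. The only point that requires care is justifying that the perspective transformation applies here: the standard statement assumes the ``scaling'' variable ($l$ in our notation) is strictly positive, so I would briefly note that worker $n\in\Omega_m$ is assigned $l_{m,n}>0$ by definition, and that the boundary case $l_{m,n}=0$ is handled consistently by setting $f(0,t)=0$ (matching the definition of $\mathbb{E}[X_{m,n}(t)]$ when no load is assigned). If a reviewer preferred a self-contained argument, the same conclusion follows from a direct Hessian computation: one checks that the Hessian of $f$ equals $e^{-u(t-al)/l}$ times a rank-one negative semidefinite matrix with diagonal entries $-u^{2}/l$ and $-u^{2}t^{2}/l^{3}$ and determinant zero, confirming concavity.
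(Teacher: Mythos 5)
Your proof is correct, and it takes a genuinely different route from the paper's. The paper argues by brute force: it defines $f(x,t)=-x\bigl(1-e^{-\frac{u}{x}(t-ax)}\bigr)$, computes its Hessian explicitly, observes that the eigenvalues are $0$ and $\frac{u^{2}(x^{2}+t^{2})}{x^{3}}>0$, concludes $\H\succeq 0$, and then sums over $n\in\Omega_m$. You instead recognize the per-worker term $l\bigl(1-e^{-\frac{u}{l}(t-al)}\bigr)=l\,h(t/l)$ with $h(s)=1-e^{-u(s-a)}$ as the perspective of a concave scalar function, which yields joint concavity on $\{l>0\}$ with no matrix computation; your structural argument also explains \emph{why} the Hessian is singular (perspective functions are positively homogeneous of degree one, so the Hessian vanishes along the ray direction $(l,t)$), a fact the paper's eigenvalue calculation exhibits but does not illuminate. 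The paper's approach is self-contained and elementary, requiring only a $2\times 2$ Hessian; yours is shorter, less error-prone, and generalizes immediately to other processing-time distributions for which the induced $h$ is concave (relevant to the paper's footnote that the method extends to any distribution making $f(x,t)$ in \eqref{fxt} convex). Your handling of the boundary $l=0$ and your fallback Hessian check (which matches the paper's computation up to the overall sign and variable ordering) are both fine; like the paper, you implicitly restrict to the region where the closed-form expression for $\mathbb{E}[X_{m,n}(t_m)]$ is valid, which is justified by the paper's observation that the optimum satisfies $t_m\geq a_{m,n}l_{m,n}$.
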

\begin{proof}
	See Appendix \ref{proof1}.
\end{proof}

Let $\l_m\triangleq\{l_{m,n}\mid n\in \Omega_m\}$. The partial Lagrangian of  $\mathcal{P}4$ is given by
\begin{align}
	&\mathcal{L}(\l_m,t_m, \lambda_m)= t_m+\lambda_m\left(L_m-\mathbb{E}[X_m(t_m)]\right)=t_m+
	\lambda_m\left[L_m-\sum_{n\in \Omega_m}l_{m,n} \left( 1-e^{-\frac{u_{m,n}}{l_{m,n}}\left(t_m-a_{m,n}l_{m,n}\right)} \right) \right], \label{p3_lagrange}
\end{align}
where $\lambda_m\geq 0$ is the Lagrange multiplier associated with \eqref{p3_cons1}.

The partial derivatives of $\mathcal{L}$ can be derived as
\begin{align}
\frac{\partial \mathcal{L}}{\partial l_{m,n}}=
\lambda_m\left[\left(1+\frac{u_{m,n}t_m}{l_{m,n}}\right)e^{-\frac{u_{m,n}}{l_{m,n}}\left(t_m-a_{m,n}l_{m,n}\right)}-1 \right],  \label{partial_l}
\end{align}
\begin{align}
\frac{\partial \mathcal{L}}{\partial t_m}=1-\lambda_m\sum_{n\in \Omega_m}u_{m,n}e^{-\frac{u_{m,n}}{l_{m,n}}\left(t_m-a_{m,n}l_{m,n}\right)}.  \label{partial_t}
\end{align}

The optimal solution $(\l^*_m,t^*_m, \lambda^*_m)$ needs to satisfy the Karush-Kuhn-Tucker (KKT) conditions
\begin{subequations}
\begin{align}
&\frac{\partial \mathcal{L}}{\partial l^*_{m,n}}=0, ~\forall n\in \Omega_m ,
~~\frac{\partial \mathcal{L}}{\partial t^*_m}=0  \label{kkt_2} \\ 
&\lambda^*_m\left(L_m-\mathbb{E}[X_m(t^*_m)]\right)=0 \label{kkt_3}  \\
&\lambda^*_m\geq 0, ~l^*_{m,n}>0  \label{kkt_4}
\end{align}
\end{subequations}

Define $\mathcal{W}_{-1}(x)$ as the lower branch of Lambert W function, where $x\leq -1$ and $\mathcal{W}_{-1}(xe^x)=x$. Let 
\begin{align}\label{phi_def}
\phi_{m,n}\triangleq \frac{1}{u_{m,n}}\left[-\mathcal{W}_{-1}(-e^{-u_{m,n}a_{m,n}-1} )-1\right]. 
\end{align}
By solving KKT conditions \eqref{kkt_2}-\eqref{kkt_4}, the optimal load allocation for each individual master is given as follows.

\begin{theorem}
	For master $m\in\mathcal{M}$, and a given subset of workers $\Omega_m \in\mathcal{N}$ serving this master, the optimal load allocation $l^*_{m,n}$ derived from $\mathcal{P}4$, and the corresponding minimum approximate completion time $t_m^*$ are given by:
	\begin{align}
		&l^*_{m,n}=\frac{L_m}{\phi_{m,n}\sum_{n\in \Omega_m} \frac{u_{m,n}}{1+u_{m,n}\phi_{m,n}}},  \label{opt_l}\\
		&t_m^*= \frac{L_m}{\sum_{n\in \Omega_m} \frac{u_{m,n}}{1+u_{m,n}\phi_{m,n}}}. \label{opt_t}
	\end{align}
\end{theorem}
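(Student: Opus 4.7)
The plan is to exploit the convexity established in the preceding lemma, which guarantees that the KKT system \eqref{kkt_2}--\eqref{kkt_4} is both necessary and sufficient for optimality. Since $\mathcal{P}4$ is convex with affine feasibility constraints, any primal-dual pair satisfying KKT is the global optimum, so it suffices to exhibit a closed-form solution that solves these equations.

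First I would tackle the stationarity condition $\partial\mathcal{L}/\partial l_{m,n}^*=0$ from \eqref{partial_l}, which, after cancelling $\lambda_m$ (assuming for now $\lambda_m\neq 0$), reduces to a single transcendental equation in the ratio $y_{m,n}\triangleq u_{m,n}t_m^*/l_{m,n}^*$:
\begin{equation}
(1+y_{m,n})\,e^{-y_{m,n}}=e^{-u_{m,n}a_{m,n}}.
\end{equation}
Multiplying both sides by $-e^{-1}$ and substituting $w=-(1+y_{m,n})$ casts this into the Lambert form $we^{w}=-e^{-u_{m,n}a_{m,n}-1}$. Since $y_{m,n}\geq 0$ forces $w\leq-1$, the lower branch $\mathcal{W}_{-1}$ is the correct selection, yielding $y_{m,n}=-1-\mathcal{W}_{-1}(-e^{-u_{m,n}a_{m,n}-1})$. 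Recognising the right-hand side as $u_{m,n}\phi_{m,n}$ by \eqref{phi_def} gives the key identity
\begin{equation}
l_{m,n}^{*}=\frac{t_m^{*}}{\phi_{m,n}},\qquad
e^{-\frac{u_{m,n}}{l_{m,n}^{*}}(t_m^{*}-a_{m,n}l_{m,n}^{*})}=\frac{1}{1+u_{m,n}\phi_{m,n}}.
\end{equation}

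Next I would rule out $\lambda_m^{*}=0$: plugging it into \eqref{partial_t} would give $1=0$, so $\lambda_m^{*}>0$ and complementary slackness \eqref{kkt_3} forces the expectation constraint \eqref{p3_cons1} to hold with equality. Substituting the two identities above into $L_m=\sum_{n\in\Omega_m}l_{m,n}^{*}\bigl(1-e^{-\frac{u_{m,n}}{l_{m,n}^{*}}(t_m^{*}-a_{m,n}l_{m,n}^{*})}\bigr)$ collapses the sum to
\begin{equation}
L_m=t_m^{*}\sum_{n\in\Omega_m}\frac{u_{m,n}}{1+u_{m,n}\phi_{m,n}},
\end{equation}
from which \eqref{opt_t} follows immediately, and then \eqref{opt_l} by $l_{m,n}^{*}=t_m^{*}/\phi_{m,n}$. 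A final sanity check verifies that \eqref{partial_t} can indeed be solved for a positive $\lambda_m^{*}$, which, together with $l_{m,n}^{*}>0$, certifies KKT and hence optimality by convexity.

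The main obstacle is the Lambert-W manipulation: one must identify the correct branch ($\mathcal{W}_{-1}$ rather than $\mathcal{W}_{0}$) using the sign constraint $y_{m,n}\geq 0$, and one must also confirm that the argument $-e^{-u_{m,n}a_{m,n}-1}$ lies in the domain $[-e^{-1},0)$ where $\mathcal{W}_{-1}$ is real-valued, which is automatic since $u_{m,n}a_{m,n}>0$. The remaining algebra is mechanical once the stationarity equation has been inverted.
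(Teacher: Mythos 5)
Your proposal is correct and follows essentially the same route as the paper's proof: rule out $\lambda_m^*=0$ via the stationarity condition in $t_m$, invert the stationarity condition in $l_{m,n}$ with the lower branch $\mathcal{W}_{-1}$ to obtain $t_m^*/l_{m,n}^*=\phi_{m,n}$, and then use complementary slackness to solve for $t_m^*$ and hence $l_{m,n}^*$. The added checks on the Lambert-W branch selection and the domain of its argument are sound refinements of the same argument, not a different approach.
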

\begin{proof}
	See Appendix \ref{proof2}.
\end{proof}


\subsection{Greedy Worker Assignment Algorithms}
Now we consider how to assign workers to different masters to minimize the task completion time $t$. 	Let
\begin{align}
v_{m,n}\triangleq \frac{u_{m,n}}{L_m(1+u_{m,n}\phi_{m,n})}.  \label{vmn}
\end{align}
Based on Theorem 1, the worker assignment problem can be transformed into a max-min allocation problem, given in the following proposition.

\begin{proposition}
	Problem $\mathcal{P}3$ is equivalent to  
	\begin{subequations}
	\begin{align}
	\mathcal{P}5:  \max_{\{k_{m,n}\}}& \min_{m\in\mathcal{M}}\sum_{n=1}^{N}k_{m,n} v_{m,n} \\
	{\rm s.t.} ~& \sum_{m=1}^{M} k_{m,n}\leq 1, 
	~k_{m,n}\in\{0,1\},~ \forall m,n.
	\end{align}
\end{subequations}
\end{proposition}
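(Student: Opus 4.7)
The plan is to decouple $\mathcal{P}2$ across masters by exploiting that, under dedicated assignment ($\mathcal{K}=\{0,1\}$), the worker subsets $\Omega_m \triangleq \{n : k_{m,n}=1\}$ are pairwise disjoint. For any feasible $\{k_{m,n}\}$ with $\sum_m k_{m,n}\leq 1$, constraint \eqref{cons_exp_simple} decomposes per master, so the smallest $t$ for which every master's constraint holds simultaneously is exactly $\max_{m\in\mathcal{M}} t_m^*$, where $t_m^*$ is the per-master minimum approximate completion time characterized by Theorem~1 with $\Omega_m$ fixed by the assignment. This observation is what makes the otherwise coupled problem $\mathcal{P}3$ separable in the binary setting.

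Next, I would substitute the closed form \eqref{opt_t} into this maximum. By the definition \eqref{vmn} of $v_{m,n}$, we have $L_m v_{m,n} = u_{m,n}/(1+u_{m,n}\phi_{m,n})$, which gives
\begin{equation}
t_m^* \;=\; \frac{L_m}{\sum_{n\in\Omega_m} L_m\, v_{m,n}} \;=\; \frac{1}{\sum_{n=1}^{N} k_{m,n}\, v_{m,n}},
\end{equation}
since $k_{m,n}\in\{0,1\}$ and $k_{m,n}=1$ iff $n\in\Omega_m$. Minimizing $t=\max_m t_m^*$ over binary assignments is therefore equivalent to maximizing $\min_m \sum_n k_{m,n}\, v_{m,n}$ subject to the same constraints, which is precisely $\mathcal{P}5$. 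The per-master optimal loads $\{l_{m,n}^*\}$ are recovered afterwards from \eqref{opt_l}, so no information is lost by passing to $\mathcal{P}5$.

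Two small points I would verify to close the argument. First, unassigned workers (those with $\sum_m k_{m,n}=0$) contribute zero to every sum and are harmless, which is why the slack in $\sum_m k_{m,n}\leq 1$ is retained without penalty. Second, any master with $\Omega_m=\emptyset$ would yield $t_m^*=+\infty$; such a configuration is automatically ruled out by the max-min criterion at optimum, and avoiding it is always feasible since $N>M$.

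The main obstacle is essentially bookkeeping: matching the set notation $\Omega_m$ used in Theorem~1 to the indicator variables $k_{m,n}$ of $\mathcal{P}2$, and explicitly justifying that the pointwise substitution of each per-master convex optimum into the joint objective yields an equivalent max-min formulation. No fundamentally new technique beyond Theorem~1 is required; the reduction is structural and follows from the separability of the dedicated-assignment constraint set.
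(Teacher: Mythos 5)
Your argument is correct and follows essentially the same route as the paper: substitute the closed-form per-master optimum $t_m^*$ from Theorem~1, identify $1/t_m^* = \sum_n k_{m,n} v_{m,n}$ via the definition of $v_{m,n}$, and observe that minimizing $\max_m t_m^*$ over binary assignments is the same as maximizing $\min_m \sum_n k_{m,n} v_{m,n}$. Your added remarks on separability of the constraint across masters and the harmlessness of unassigned workers are sound bookkeeping that the paper leaves implicit.
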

\begin{proof}
	We use $t_m^*$ to represent the minimum task completion time of each master $m$ given the set of workers $\Omega_m$, and define $V_m\triangleq \frac{1}{t_m^*}$. From Theorem 1, we have:
	\begin{align}
	V_m=\frac{1}{L_m}\sum_{n\in \Omega_m} \frac{u_{m,n}}{1+u_{m,n}\phi_{m,n}} =\sum_{n=1}^{N}k_{m,n} v_{m,n}.  
	\end{align}
	Note that in $\mathcal{P}3$, 
	$t^*=\max_{m\in\mathcal{M}} t_m^*$. With $t_m^*>0$ and $V_m>0$, $\min_{\{k_{m,n}\}}\max_{m\in\mathcal{M}} t_m^*$ is equivalent to $ \max_{\{k_{m,n}\}} \min_{m\in\mathcal{M}}V_m$.
\end{proof}

\begin{algorithm} [!t]
	\caption{Iterated Greedy Algorithm for Dedicated Worker Assignment}
	\begin{algorithmic}[1]
		\State \textbf{Input}: $\Omega_m=\emptyset$, $V_m=0$, and $\{v_{m,n}\}$ according to \eqref{vmn}.
		\For {$n=1,...,N$}  \Comment{\textit{Initialization}}
		\State $m^*=\arg\max_{m\in \mathcal{M}} v_{m,n}$.
		\State $V_{m^*}=V_{m^*}+v_{m^*,n}$, $\Omega_{m^*}=\Omega_{m^*} \cup \{n\}$.
		\EndFor
		\While {iteration is not terminated}    \Comment{\textit{Main iteration}}
			\For {$n=1,... ,|\mathcal{N}|$}    \Comment{\textit{Insertion}}
			\State Find master $m_1$ that worker $n$ is serving. 
			\State $m_2=\arg \min_{m\in \mathcal{M}/\{m_1\}} V_{m}$.
			\State $V'_{m_1}=V_{m_1}-v_{m_1,n}$, $V'_{m_2}=V_{m_2}+v_{m_2,n}$.
			\State $V'_{m}=V_{m}, \forall m \in\mathcal{M}/\{m_1,m_2\}$.
				\If {$\min_ {m\in \mathcal{M}}V'_{m}>\min_ {m\in \mathcal{M}}V_{m}$}
				\State $\Omega_{m_1}=\Omega_{m_1}-\{n\}$, $\Omega_{m_2}=\Omega_{m_2}+\{n\}$.
				\EndIf
			\EndFor
			\For {$n_1,n_2=1,... |\mathcal{N}|$}    \Comment{\textit{Interchange}}
			\State Masters $m_1, m_2$ served by workers $n_1, n_2$, $V'_{m_1}=V_{m_1}-v_{m_1,n_1}+v_{m_1,n_2}$, and $V'_{m_2}=V_{m_2}-v_{m_2,n_2}+v_{m_2,n_1}$. 
			\If {$m_1 \neq m_2$, $n_1 \neq n_2$, $v_{m_1,n_1}+v_{m_2,n_2}<v_{m_1,n_2}+v_{m_2,n_1}$, $V'_{m_1}>V_{\text{min}}$, and $V'_{m_2}>V_{\text{min}}$}
				\State $\Omega_{m_1}=\Omega_{m_1}-\{n_1\}+\{n_2\}$.
				\State $\Omega_{m_2}=\Omega_{m_2}-\{n_2\}+\{n_1\}$.
			\EndIf
			\EndFor
			\State Randomly remove a subset of $\mathcal{N}_s$ workers, and update $V_m$ based on the current assignment.  \Comment{\textit{Exploration}}
			\While {$\mathcal{N}_s \neq \emptyset$}
			\State $\{m^*,n^*\}=\arg\max_{m\in \mathcal{M}, n\in\mathcal{N}_s } v_{m,n}$.
			\State $V_{m^*}=V_{m^*}+v_{m^*,n^*}$.
			\State $\Omega_{m^*}=\Omega_{m^*} \cup \{n^*\}$, $\mathcal{N}_s=\mathcal{N}_s-\{n^*\}$.
			\EndWhile
			\EndWhile
	\end{algorithmic}
\end{algorithm}

\begin{algorithm} [!t]
	\caption{Simple Greedy Algorithm for Dedicated Worker Assignment}
	\begin{algorithmic}[1]
		\State \textbf{Input}: $\mathcal{M}_0=\{1,2,...,M\}$, $\mathcal{N}_0=\{1,2,...,N\}$, $\Omega_m=\emptyset$, $V_m=0$, and $\{v_{m,n}\}$ according to \eqref{vmn}.
		\While {$\mathcal{M}_0 \neq \emptyset$} \Comment{\textit{Initialization}}
		\State $\{m^*,n^*\}=\arg\max_{m\in \mathcal{M}_0, n\in\mathcal{N}_0 } v_{m,n}$.
		\State $V_{m^*}=V_{m^*}+v_{m^*,n^*}$.
		\State $\Omega_m=\Omega_m \cup n^*$, $\mathcal{M}_0=\mathcal{M}_0-\{m^*\}$, $\mathcal{N}_0=\mathcal{N}_0-\{n^*\}$.
		\EndWhile
		\While {$\mathcal{N}_0 \neq \emptyset$}  \Comment{\textit{Main loop}}
		\State Find $m^*=\arg\min_{m\in \mathcal{M}}V_m$.
		\State Find $n^*=\arg\max_{ n\in\mathcal{N}_0 } v_{m^*,n}$.
		\State $V_{m^*}=V_{m^*}+v_{m^*,n^*}$.
		\State $\Omega_m=\Omega_m \cup n^*$, $\mathcal{N}_0=\mathcal{N}_0-\{n^*\}$.
		\EndWhile
	\end{algorithmic}
\end{algorithm}

Problem $\mathcal{P}5$ is a combinatorial optimization problem named \emph{max-min allocation}, which is motivated by the fair allocation of indivisible goods \cite{bryan1982scheduling, chakrabarty2009on,asadpour2010an}. Specifically, there are $M$ agents and $N$ items. Each item has a unique value for each agent, and can only be allocated to one agent. The goal is to maximize the minimum sum value of agents, by allocating items as fairly as possible. 
In our problem, \emph{each master corresponds to an agent with sum value $V_m$, and each worker $n$ can be considered as an item with value $v_{m,n}$ for master $m$}. 
The problem can be reduced to a NP-complete partitioning problem \cite{hayes2002}, when considering only $2$ agents and that each item has identical value for both agents. Therefore, problem $\mathcal{P}5$ is NP-hard.
An $O(N^\epsilon)$-approximation algorithm in time $N^{O(\frac{1}{\epsilon})} $ is proposed in \cite{chakrabarty2009on} for max-min allocation, with $\epsilon \geq \frac{9\log\log N}{\log N}$. Another polynomial-time algorithm is proposed in \cite{asadpour2010an}, guaranteeing $O(\frac{1}{M\log^3M})$ approximation to the optimum.
However, these algorithms have high computational complexity, and are difficult to implement.
We propose two low-complexity greedy algorithms as follows. 

An iterated greedy algorithm is proposed in Algorithm 1, which is inspired by \cite{peyro2010iterated}, where a similar min-max fairness problem is investigated. In the initialization phase, each worker is assigned to the master for which its value $v_{m,n}$ is the highest. The main iteration has the following three phases:

1) \emph{Insertion:} We extract each worker $n$ from the current master $m_1$, and assign it to a master $m_2\neq m_1$ with minimum sum value $V_{m_2}$. 
As shown in Lines 12-14, if the minimum sum value among masters is improved, let worker $n$ serve master $m_2$. The complexity is $O(MN)$.

2) \emph{Interchange:} We pick two workers $n_1$, $n_2$ that serve two masters $m_1$, $m_2$, and interchange their assignments. If the minimum sum $\min V_m$ is improved, and the overall system performance is improved (i.e., $v_{m_1,n_1}+v_{m_2,n_2}<v_{m_1,n_2}+v_{m_2,n_1}$), the interchange is kept. The complexity is $O(N^2)$.
Note that the insertion and interchange are repeated for multiple times within each iteration, in order to obtain a local optimum.

3) \emph{Exploration:} We randomly remove some workers from the current assignment, and allocate them in a greedy manner. This operation can be regarded as an exploration, which prevents the algorithm to be stuck in a local optimum.
When the number of iterations reach a predefined maximum, or the performance does not improve any more, the main loop is terminated. Note that the final output is the assignment obtained before the exploration phase. 

While Algorithm 1 still requires iterations to obtain a good assignment, Algorithm 2, which is inspired by the \emph{largest-value-first} algorithm in \cite{bryan1982scheduling}, is even simpler with only one round. In a homogeneous case with $v_{1,n}=\cdots=v_{M,n}$, the algorithm finds an agent $m$ with minimum sum value $V_m$, and assigns a remaining item with the largest value $v_{m,n}$. The algorithm guarantees a $\frac{4}{3}$ approximation to the optimum.
We extend the idea of the largest-value-first to the heterogeneous environment, and propose a simple greedy algorithm. As shown in Algorithm 2, in the initialization phase, we find a master without any workers assigned, and allocate an available worker that has the largest contribution for it. In the main loop, we always select master $m$ with the minimum sum value $V_m$, and allocate a remaining worker that has the maximum value $v_{m,n}$ for this master. The overall complexity of the simple greedy algorithm is $O(N^2)$.

\section{Probabilistic Worker Assignment} \label{oppor}
In this section, we solve problem $\mathcal{P}3$ for the probabilistic worker assignment, where $\mathcal{K}=[0,1]$. The key challenge is the non-convexity of constraint \eqref{cons_exp}. We observe that constraint \eqref{cons_exp} can be decomposed into the difference of convex functions, and adopt SCA method to jointly solve the worker assignment and load allocation problems.

From Lemma 1, we know that $f(l,t)$ defined in \eqref{fxt} is convex. Thus $le^{-\frac{u t}{l}}$ is also convex with respect to $l$ and $t$. 
Let $\w\triangleq\{l,k,t\}$, $g(\w)\triangleq-kl$, and $h(\w)\triangleq kle^{-\frac{u t}{l}}$. It is easy to see that the following functions are all convex:
\begin{align}
	&g^{+}(\w)\triangleq\frac{1}{2}\left(k^2+l^2\right), ~~~~~~~g^{-}(\w)\triangleq\frac{1}{2}\left(k+l\right)^2, \\
	&h^{+}(\w)\triangleq\frac{1}{2}\left(k+le^{-\frac{u t}{l}}\right)^2, ~ h^{-}(\w)\triangleq\frac{1}{2}\left(k^2+l^2e^{-\frac{2u t}{l}}\right), 
\end{align}
and we have
\begin{align}
	&g(\w)=g^{+}(\w)-g^{-}(\w), ~~~h(\w)=h^{+}(\w)-h^{-}(\w).
\end{align}

%

\begin{algorithm} [!t]
	\caption{SCA-based Probabilistic Worker Assignment and Load Allocation Algorithm}
	\begin{algorithmic}[1]
		\State \textbf{Input}: find a feasible point of  $\mathcal{P}3$, $\z_0$, set $\gamma_0=1$, $r=0$, $\alpha \in (0,1)$.
		\While {$\z_r$ is not a stationary solution} 
		\State Solve the optimal solution $\w_r$ of $\mathcal{P}(\z_r)$.
		\State $\z_{r+1}=\z_r+ \gamma_r (\w_r-\z_r)$.
		\State $\gamma_{r+1}=\gamma_r(1-\alpha \gamma_r)$, $r \leftarrow r+1$.
		\EndWhile
	\end{algorithmic}
\end{algorithm}

By linearizing the concave parts $-g^{-}(\w)$ and $-h^{-}(\w)$, given any two points $\w$, $\z$, the convex upper approximations of $g(\w)$ and $h(\w)$ can be obtained as follows \cite{scutari2017p1}:
\begin{align}
	\tilde{g}(\w,\z)\triangleq&g^{+}(\w)-g^{-}(\z)    -\nabla_{\w}g^{-}(\z)^T(\w-\z) \geq  g(\w), \\
	\tilde{h}(\w,\z)\triangleq&h^{+}(\w)-h^{-}(\z)   -\nabla_{\w}h^{-}(\z)^T(\w-\z) \geq  h(\w).  
\end{align}

Let subscript $\{m,n\}$ denote the variables, parameters and functions related to master $m$ and worker $n$, e.g., $\w_{m,n}=\{l_{m,n},k_{m,n},t\}$, $h_{m,n}(\w_{m,n})=l_{m,n}k_{m,n}e^{-\frac{u_{m,n} t}{l_{m,n}}}$; and thus,
\begin{align}
	-\mathbb{E}[X_m(t)]=\sum_{n=1}^N \left[{g}_{m,n}(\w_{m,n})+e^{u_{m,n}a_{m,n}} {h}_{m,n}(\w_{m,n}) \right]. 
\end{align}
Let $\w_m\triangleq \{\w_{m,1},...,\w_{m,N}\}$, $\z_m\triangleq\{\z_{m,1},...,\z_{m,N}\}$.
Now we can give a convex upper approximation for the left-hand side of constraint \eqref{cons_exp_simple} in the following lemma.

\begin{lemma}
	The left-hand side of constraint \eqref{cons_exp_simple} can be approximated by a convex function as follows: 
	\begin{align}
	L_m&-\mathbb{E}[X_m(t)]\leq L_m+\sum_{n=1}^N \big[ \tilde{g}_{m,n}(\w_{m,n},\z_{m,n})   \nonumber \\
	& +e^{u_{m,n}a_{m,n}} \tilde{h}_{m,n}(\w_{m,n},\z_{m,n}) \big]\triangleq \tilde{q}_m(\w_m,\z_m). \label{cons_sca_left} 
	\end{align}
\end{lemma}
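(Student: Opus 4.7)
My plan is to establish the lemma in two steps: first prove the stated inequality using the difference-of-convex decomposition introduced just before the lemma, then verify that the majorant $\tilde{q}_m(\w_m,\z_m)$ is convex in $\w_m$ for fixed $\z_m$.

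For the inequality, I would first rewrite each term of the expected-count sum in DC form. Expanding the exponent in \eqref{x_mn_def} gives
\begin{align*}
-\mathbb{E}[X_{m,n}(t)] &= -k_{m,n}l_{m,n} + e^{u_{m,n}a_{m,n}}\, k_{m,n}l_{m,n}e^{-u_{m,n}t/l_{m,n}} \\
&= g_{m,n}(\w_{m,n}) + e^{u_{m,n}a_{m,n}}\, h_{m,n}(\w_{m,n}),
\end{align*}
so that after summing over $n$ and adding $L_m$, the left-hand side of \eqref{cons_exp_simple} becomes $L_m+\sum_n\bigl[g_{m,n}(\w_{m,n})+e^{u_{m,n}a_{m,n}}h_{m,n}(\w_{m,n})\bigr]$. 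Since $g_{m,n}^-$ and $h_{m,n}^-$ are convex, the first-order underestimator inequality applied at $\z_{m,n}$ yields $-g_{m,n}^-(\w_{m,n}) \le -g_{m,n}^-(\z_{m,n}) - \nabla g_{m,n}^-(\z_{m,n})^T(\w_{m,n}-\z_{m,n})$, and similarly for $h_{m,n}^-$. Adding the convex parts $g_{m,n}^+(\w_{m,n})$ and $h_{m,n}^+(\w_{m,n})$ to both sides gives the termwise bounds $g_{m,n}(\w_{m,n})\le \tilde{g}_{m,n}(\w_{m,n},\z_{m,n})$ and $h_{m,n}(\w_{m,n})\le \tilde{h}_{m,n}(\w_{m,n},\z_{m,n})$. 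Summing over $n$ with the non-negative weights $e^{u_{m,n}a_{m,n}}$ then produces exactly \eqref{cons_sca_left}.

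For convexity, I would observe that for each fixed $\z_{m,n}$ the quantity $-g_{m,n}^-(\z_{m,n})-\nabla g_{m,n}^-(\z_{m,n})^T(\w_{m,n}-\z_{m,n})$ is affine in $\w_{m,n}$, so $\tilde{g}_{m,n}(\cdot,\z_{m,n})$ is the sum of the (given) convex function $g_{m,n}^+$ and an affine function, hence convex; the same argument shows $\tilde{h}_{m,n}(\cdot,\z_{m,n})$ is convex. Convexity of $\tilde{q}_m(\cdot,\z_m)$ then follows because it is a non-negatively weighted sum of these convex terms plus the constant $L_m$.

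I do not expect any serious obstacle, since this is the standard SCA/majorization construction and the convexity of the four component functions $g_{m,n}^\pm,h_{m,n}^\pm$ is already granted in the paper (with $h_{m,n}^+$ relying on the convexity of $l e^{-ut/l}$ inherited from Lemma 1). The only minor book-keeping point is checking the DC identity $-\mathbb{E}[X_{m,n}(t)] = g_{m,n}(\w_{m,n}) + e^{u_{m,n}a_{m,n}} h_{m,n}(\w_{m,n})$ on the boundary cases $k_{m,n}=0$ or $l_{m,n}=0$; this is handled by interpreting $l e^{-ut/l}\to 0$ as $l\to 0^+$, consistent with observation 1) in Sec.~\ref{for}, which guarantees $t\ge a_{m,n}l_{m,n}$ at the optimum whenever $l_{m,n}>0$.
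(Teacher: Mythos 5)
Your proposal is correct and follows exactly the route the paper intends: rewriting $-\mathbb{E}[X_{m,n}(t)]$ as $g_{m,n}(\w_{m,n})+e^{u_{m,n}a_{m,n}}h_{m,n}(\w_{m,n})$ and then applying the first-order (linearization) upper bounds $\tilde{g}_{m,n}\geq g_{m,n}$, $\tilde{h}_{m,n}\geq h_{m,n}$ termwise with the positive weights $e^{u_{m,n}a_{m,n}}$, with convexity of $\tilde{q}_m$ following since each majorant is a convex function plus an affine one. The paper leaves this lemma unproved as an immediate consequence of the preceding DC decomposition, so your write-up simply makes that implicit argument explicit, including the correct convexity bookkeeping.
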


Let $\z\triangleq\{\z_1, ..., \z_M\}$ be a feasible point of $\mathcal{P}3$. The convex approximation to $\mathcal{P}3$ at point $\z$, defined as $\mathcal{P}(\z)$, is given by:
\begin{subequations}
\begin{align}
	\mathcal{P}(\z): \min_{\{l_{m,n},k_{m,n},t\}} &~~~~~~t  \\
	\text{s.t.} ~~~~~& \tilde{q}_m(\w_m,\z_m)\leq 0, ~~\forall m, \label{cons_sca} \\
	&\text{Constraints } \eqref{ori_cons_sum}, \eqref{cons_l}. \nonumber
\end{align}
\end{subequations}

A probabilistic worker assignment and load allocation algorithm is proposed in Algorithm 3 based on the SCA method. 
A diminishing step-size rule is adopted with decreasing ratio $\alpha\in(0,1)$, guaranteeing the convergence of the SCA \cite{scutari2017p1}, and in line 5, $\gamma_r$ is the step-size in the $r$th iteration.
Starting from a feasible point $\z_0$ of $\mathcal{P}3$, we iteratively solve convex optimization problems $\mathcal{P}(\z_r)$, in which constraint \eqref{cons_exp_simple} is replaced by its upper convex approximation \eqref{cons_sca}. 
The iteration terminates when the solution is stationary (e.g., $\left\lVert \w_r -\z_r \right\rVert_2\leq \epsilon$), and according to Theorem 2 in \cite{scutari2017p1}, the stationary solution obtained by the SCA based algorithm is a local optimum.


\subsection{Comparison of Dedicated and Probabilistic Assignments}
We remark that the completion time of probabilistic worker assignment is a lower bound on what is achieved by dedicated worker assignment, since any feasible point of dedicated assignment is also feasible for probabilistic assignment.
However, dedicated assignment simplifies the connections between workers and masters, and requires less communication for the multicast of $\x_m$ and less storage at each worker.
Moreover, the proposed dedicated assignment algorithms have lower computational complexity and are easier to  implement.

\section{Simulation Results}  \label{sim}

In this section, we evaluate the average task completion time of the proposed dedicated and probabilistic worker assignment algorithms, in both small-scale and large-scale scenarios. In the small-scale scenario, we consider $M=2$ masters and $N=20$ workers, and three benchmarks:
1) \emph{Uncoded computing with uniform dedicated worker assignment}: each master is assigned an equal number of $\frac{N}{M}$ workers, and $\A_m$ is equally partitioned into $\frac{N}{M}$ sub-matrices without coding, each with $\frac{L_mM}{N}$ rows.
2) \emph{Coded computing with uniform dedicated worker assignment} \cite{codedhet}: each master is assigned an equal number of $\frac{N}{M}$ workers, and the load is allocated according to Theorem 1.
3) \emph{Brute-force search for dedicated worker assignment}: the oracle solution for dedicated worker assignment is obtained by searching all possible combinations, and the load is allocated according to Theorem 1.
In the large-scale scenario, we consider $M=4$ masters and $N=50$ workers, and only use the first two benchmarks, due to the high complexity of the brute-force search.

The straggling parameter $u_{m,n}$ is randomly selected within $[1, 5] ~\mathrm{ms}^{-1}$, the shift parameter is set as $a_{m,n}=\frac{1}{u_{m,n}}~\mathrm{ms}$, and $L_m=10^5$, $\forall m$ \cite{codedhet}. In Algorithm 1, we randomly remove $\frac{N}{M}$ workers for each exploration. In Algorithm 3, we set the convergence criteria as $|1-\frac{t'}{t}|<10^{-6}$, decreasing ratio $\alpha=10^{-3}$, and use CVX toolbox\footnote{http://cvxr.com/cvx/} to solve each convex approximation problem. We obtain the worker assignment and load allocation from the algorithms that minimize the approximate completion time. Then we carry out $10^5$ Monte Carlo realizations and calculate the empirical cumulative distribution function (CDF) and the average of task completion time. 

\begin{figure}[!t]
	\centering
	\includegraphics[width=0.65\textwidth]{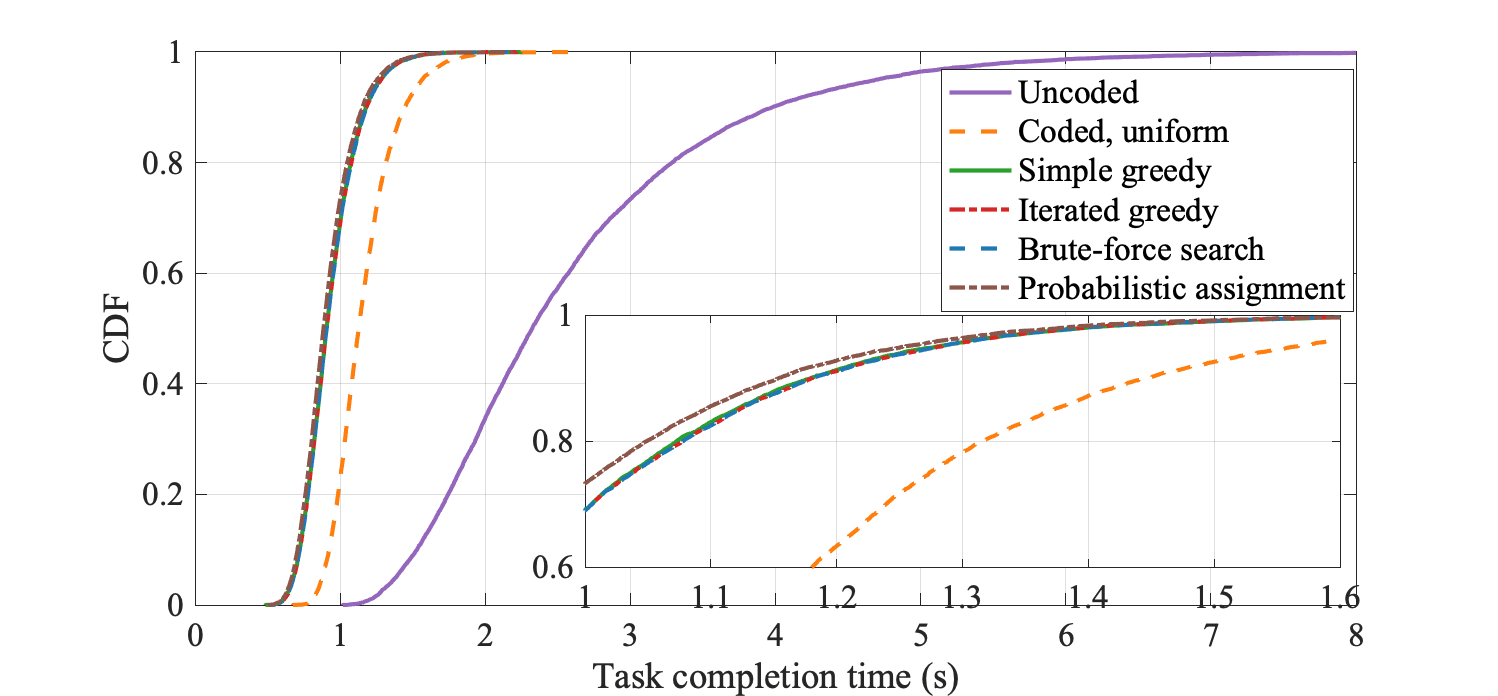}
	\caption{The CDF of task completion time achieved by different worker assignment and load allocation algorithms with $2$ masters and $20$ workers.}	\label{bar_m2n20}
\end{figure}

\begin{figure}[!t]
	\centering
	\includegraphics[width=0.65\textwidth]{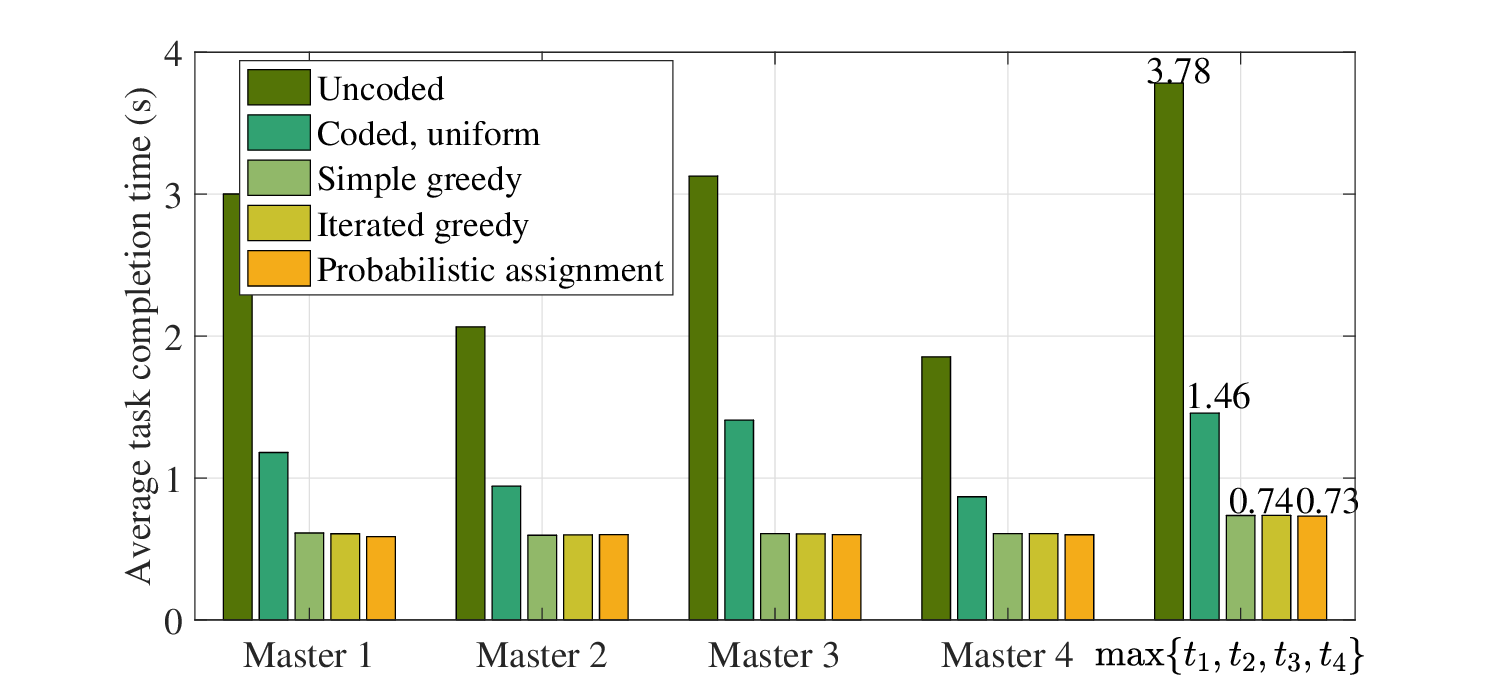}
	\caption{Average task completion time achieved by different worker assignment and load allocation algorithms with $4$ masters and $50$ workers.}	\label{bar_m4n50}
\end{figure}

\begin{figure}[!t]
	\centering
	\includegraphics[width=0.65\textwidth]{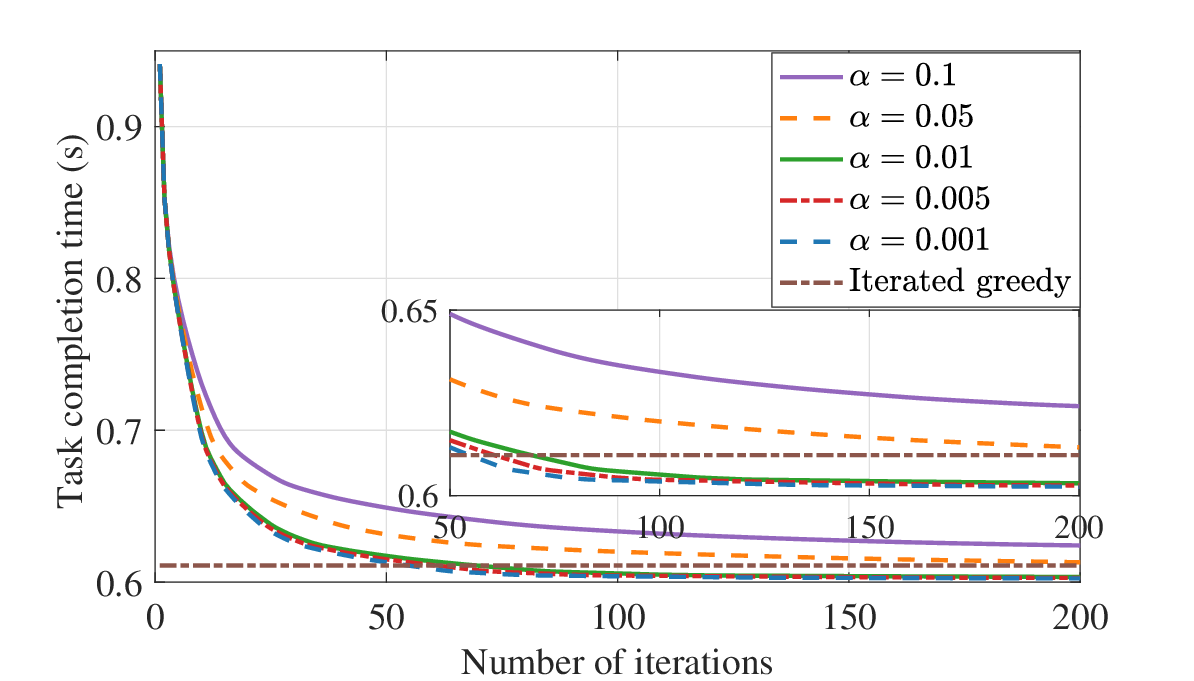}
	\caption{Convergence of the SCA-based probabilistic worker assignment algorithm with $4$ masters and $50$ workers.}
	\label{convergence}
\end{figure}

Fig. \ref{bar_m2n20} shows the CDFs of the task completion time.
The proposed greedy dedicated assignment and SCA-based probabilistic assignment algorithms outperform the uncoded and coded benchmarks with uniform assignment of dedicated workers.
The CDFs achieved by iterated and greedy algorithms are very close, and both performances are close to the optimal brute-force search algorithm. Specifically, when the successful probability $\rho_s=0.98$, the three dedicated assignment algorithms all achieve task completion time $1.40\mathrm{s}$.
Probabilistic assignment further outperforms the dedicated assignment, which is consistent with the fact that it is a lower bound for dedicated assignment. When $\rho_s=0.98$, probabilistic assignment achieves task completion time $1.38\mathrm{s}$.

Fig. \ref{bar_m4n50} compares the average task completion time achieved by the proposed algorithms and benchmarks.
The first four groups of bars show the average time each master needs to finish its own task using different algorithms. The fifth group of bars show the average task completion time of the system, which is what we aim to minimize, obtained by averaging the maximum time of each realization.
From the fifth group of bars, we can see that all the proposed algorithms reduce the delay performance by more than $80\%$ over uncoded benchmark, and more than $49\%$ over coded benchmark.
The performance gain is mainly achieved by taking into account the heterogeneity of the system.
From the first four groups of bars, we can see that the average delay of each master achieved by our proposed algorithms are very close, indicating that the workers and loads are assigned in a balanced manner.

In Fig. \ref{convergence}, the impact of the decreasing ratio $\alpha$ on the convergence of SCA-based probabilistic assignment algorithm is evaluated, in the scenario with $4$ masters and $50$ workers. The decreasing ratio $\alpha$ decides the step-size $\gamma_r$, and thus the convergence rate of the SCA algorithm.
We can see that by choosing a proper $\alpha$, the proposed SCA algorithm can converge after $100$ iterations, and outperforms the iterated greedy algorithm for dedicated worker assignment.

\section{Conclusions} \label{con}
We have considered a joint worker assignment and load allocation problem in a distributed computing system with heterogeneous computing servers, i.e., workers, and multiple master nodes competing for these workers. MDS coding has been adopted by the masters to mitigate the straggler effect, and both dedicated and probabilistic assignment algorithms have been proposed, in order to minimize the average task completion time.
Simulation results show that the proposed algorithms can reduce the task completion time by $80\%$ compared to uncoded task assignment, and $49\%$ over an unbalanced coded scheme.
While probabilistic assignment is more general, we have observed through simulations that the two have similar delay performances.
We have noted that dedicated assignment has lower computational complexity and lower communication and storage requirements, beneficial for practical implementations.
As future work, we plan to take communication delay into consideration, and develop decentralized algorithms.



\appendices{}

\section{Proof of Lemma 1} \label{proof1}
It is easy to see that \eqref{p3_obj} and \eqref{p3_cons2} are convex objective and constraints, respectively.
Let 
\begin{align}
	f(x,t)=-x \left( 1-e^{-\frac{u}{x}(t-ax)}\right),  \label{fxt}
\end{align}
with variables $x>0$, $~t\geq ax$, and parameters $u>0$, $a>0$.
The Hessian matrix of $f(x,t)$ is:
\begin{align}
\H= \left[
\begin{matrix}
\frac{\partial^2 f }{\partial x^2 }  &\frac{\partial^2 f }{\partial x \partial t }\\
\frac{\partial^2 f }{\partial t \partial x} &\frac{\partial^2 f }{\partial  t^2 }
\end{matrix}
\right]=e^{-\frac{u}{x}(t-ax)} \left[
\begin{matrix}
\frac{u^2t^2}{x^3}  &-\frac{u^2t}{x^2}\\
-\frac{u^2t}{x^2}&\frac{u^2}{x}
\end{matrix}
\right].
\end{align}
The eigenvalues of $\H$ are $0$ and $\frac{u^2(x^2+t^2)}{x^3}>0$. Thus $\H \succeq 0$, and $f(x,t)$ is convex.
Let $u=u_{m,n}$ and $a=a_{m,n}$, $-\mathbb{E}[X_{m,n}(t_m)]=f(l_{m,n},t_m)$ is convex.
Constraint \eqref{p3_cons1} 
is the summation of convex functions, and hence convex.
Therefore, $\mathcal{P}4$ is a convex optimization problem.

\section{Proof of Theorem 1} \label{proof2}
By jointly considering \eqref{partial_t} and \eqref{kkt_2}, we can get $\lambda_m^*>0$. Then,
substituting \eqref{partial_l} into $\frac{\partial \mathcal{L}}{\partial l_{m,n}^*}=0$, we have:
\begin{align}
-\left(1+\frac{t_m^*u_{m,n}}{l_{m,n}^*}\right)e^{-\left(1+\frac{t_m^*u_{m,n}}{l_{m,n}^*}\right)}=-e^{-u_{m,n}a_{m,n}-1}, 
\end{align}
\begin{align} \label{opt_t_over_l}
	\frac{t^*_m}{l^*_{m,n}}=\frac{-\mathcal{W}_{-1}(-e^{-u_{m,n}a_{m,n}-1} )-1}{u_{m,n}}=\phi_{m,n}.
\end{align}
Substituting \eqref{opt_t_over_l} into \eqref{kkt_3}, we have
\begin{align}
&L_m-\sum_{n\in \Omega_m} \frac{t^*_m}{\phi_{m,n}}\left(1-\frac{1}{1+u_{m,n}\phi_{m,n}}\right)=0.
\end{align}
Thus, $t_m^* $ and $l^*_{m,n}$ can be derived as in Theorem 1.


\end{document}